\pdfoutput=1
\documentclass[pdflatex,sn-mathphys-num]{sn-jnl}

\usepackage{graphicx}
\usepackage{amsmath,amssymb,amsfonts}
\usepackage{amsthm}
\usepackage[title]{appendix}
\usepackage{xcolor}
\usepackage{booktabs}
\usepackage{array}
\usepackage{multirow}

\theoremstyle{thmstyleone}
\newtheorem{lemma}{Lemma}
\theoremstyle{thmstyletwo}

\newcommand{\nEB}{n_{\mathrm{EB}}}
\newcommand{\Tr}{\mathrm{Tr}}
\newcommand{\lmin}{\lambda_{\min}}
\newcommand{\cnot}{\textsc{cnot}}
\newcommand{\ket}[1]{|#1\rangle}
\newcommand{\bra}[1]{\langle #1|}

\begin{document}

\title[Measuring an entanglement-breaking index]{Measuring the entanglement-breaking index of a qubit collision channel on a quantum processor: protocol and worked example}

\author*[1]{\fnm{Eran} \sur{Kopel}}\email{erankopel@tauex.tau.ac.il}
\affil*[1]{\orgname{Tel Aviv University}, \orgaddress{\city{Tel Aviv}, \country{Israel}}}

\abstract{The entanglement-breaking index of a quantum channel is the smallest number of
self-compositions after which the channel destroys all entanglement with any reference
system. A photonic experiment has fixed an index of 2; to our knowledge no larger index
has been located by measuring the $n$-round channel across rounds. We give a protocol for
doing so on a programmable processor, for any qubit channel realised as a collision model
whose ancillas are prepared in states diagonal in a known basis, so that the bath
polarisation is set exactly by weighting basis-state configurations in the analysis.
Single-qubit process tomography of the $n$-round channel gives the smallest eigenvalue of
its partially transposed Choi matrix, read against zero at a pre-registered five standard
deviations; entanglement breaking survives further rounds, so the readings are monotone in
$n$ and the index follows. Comparing each direct measurement with the composition of the
measured single-round map tests the repeated-interaction assumption. We show that a
readable step needs margins on both sides, show how depolarising and reset errors move the
readings, and give a power-based pilot rule. As a worked example we take one round of a
three-qubit feedback loop, compiled to eight CZ gates on a heavy-hex path with mid-circuit
reset, with certified ideal indices 3, 2 and 2 for three baths. The run is estimated to
cost a few hundred seconds of processor time, and simulations under a device noise model,
with the pilot's shot raise at low noise, recover all three indices.}

\keywords{entanglement-breaking channels, entanglement-breaking index, collision models,
quantum process tomography, mid-circuit reset, superconducting quantum processors}

\maketitle

\section{Introduction}\label{sec:intro}

A quantum channel is entanglement breaking (EB) if its output is separable from any
reference system, whatever the input \cite{HSR2003,Ruskai2003}. A channel $\Phi$ that is
not EB may become so under composition, and the smallest $n$ for which it does,
\begin{equation}
  \nEB(\Phi) \;=\; \min\{\, n \ge 1 : \Phi^n \text{ is EB}\,\},
  \label{eq:index}
\end{equation}
is its entanglement-breaking index, or EB order
\cite{DePasquale2012,LG2015}. Which channels are eventually EB, and how fast, is well
studied \cite{Rahaman2018,Christandl2019,Hanson2020}. For a qubit channel the question
is a finite computation: $\Phi^n$ is EB exactly when the partial transpose of its Choi
matrix \cite{Choi1975} is positive semidefinite \cite{Peres1996,Horodecki1996,HSR2003}.

The experimental record is thinner. Cuevas et al.\ \cite{Cuevas2017} built a photonic
channel, certified by Choi-state tomography that its two-fold concatenation $\Phi\circ\Phi$
is EB, and showed that an intermediate unitary $F$ restores entanglement transmission,
$\Phi\circ F\circ\Phi$ not being EB. By the ideal property recalled in
Sec.~\ref{sec:choi}, the second result also shows that $\Phi$ itself is not EB, so their
data fix $\nEB(\Phi) = 2$, and they relate their construction to the EB order.
Certification that a single channel is not EB, the basic test for a quantum memory, has
been developed in theory \cite{Rosset2018} and demonstrated with photons
\cite{Mao2020,Graffitti2020}. Entanglement sudden death \cite{Almeida2007} is a different
phenomenon: a state loses its entanglement at a finite time under continuous noise, while
the local channel is not itself EB. To our knowledge no experiment has located a larger
index, that is, scanned the number of rounds beyond two and found the step at which the
$n$-round channel becomes EB, nor compared such a direct measurement with the composition
of the measured single-round map.

Collision models are the natural setting \cite{Scarani2002,Ciccarello2022,Campbell2021}.
A system meets a fresh ancilla, or a freshly reset one, in each round, so one round is one
compiled circuit block and $n$ rounds are its $n$-fold repetition. Collision models have
run on photonic, superconducting and trapped-ion hardware
\cite{Cuevas2019,GarciaPerez2020,Cattaneo2023,Wang2026}, and multi-step process
characterisation that tests the Markov assumption has been demonstrated on a superconducting
processor \cite{White2020}.

This paper gives a protocol that measures $\nEB$ directly, for any qubit channel realised
as a collision round whose ancillas are prepared in a state diagonal in a known basis
(Secs.~\ref{sec:setting} and \ref{sec:protocol}). Four features make it practical. The
bath polarisation is set exactly, without preparing mixed states, by weighting basis-state
configurations in the analysis (Lemma~\ref{lem:mix}). Single-qubit process tomography of
the $n$-round channel needs no reference qubit. Each reading is pre-registered at five
standard deviations, and the readings must be monotone in $n$, so an inconsistent pattern
is itself a signal. And the comparison of the direct $n$-round channel with the composition
of the measured single-round map tests the repeated-interaction assumption on which
collision models rest. Section~\ref{sec:design} gives the design rules: a step of the
staircase is readable only if both of its sides are, and noise thickens one side while
eating the other. Section~\ref{sec:example} works the protocol through for one round of a
three-qubit feedback loop \cite{PaperI,PaperII} on a heavy-hex superconducting processor, from
certified targets to compilation, budget and simulated runs.
Section~\ref{sec:beyond} summarises what the same machinery certifies beyond static powers
and why those targets are out of reach today, and Sec.~\ref{sec:discussion} states the
limitations.

This is version 3 of arXiv:2609.09350. Version 2 \cite{PaperVv2} is the full certified
design study of the same loop; the present version rewrites it around the measurement
protocol, corrects several of its statements (Sec.~\ref{sec:discussion}), and keeps its
certificates in the ancillary files. The use of generative AI in preparing this version is
documented in Appendix~\ref{app:ai}.

\section{Setting}\label{sec:setting}

\subsection{Qubit channels, the Choi test, and monotone readings}\label{sec:choi}

Write a qubit state as $\rho = \tfrac12(\mathbb{I} + \mathbf r\cdot\boldsymbol\sigma)$.
A qubit channel acts as $\mathbf r \mapsto A\mathbf r + \mathbf c$, with a real $3\times3$
matrix $A$ and a real vector $\mathbf c$; we call $(A, \mathbf c)$ the affine pair.
Composition is matrix algebra: $\Phi_2\circ\Phi_1$ has the pair
$(A_2A_1,\, A_2\mathbf c_1 + \mathbf c_2)$. With the maximally entangled state
$\ket{\Omega} = (\ket{00} + \ket{11})/\sqrt2$ on a reference $R$ and the system, the
Choi matrix is $J(\Phi) = (\mathrm{id}\otimes\Phi)(\ket{\Omega}\bra{\Omega})$, and its
partial transpose on $R$ has the closed form
\begin{equation}
  H \;=\; J(\Phi)^{T_R} \;=\; \frac14\Bigl[\,\mathbb{I}\otimes\mathbb{I}
  + \sum_{j} c_j\, \mathbb{I}\otimes\sigma_j
  + \sum_{j,k} A_{jk}\, \sigma_k\otimes\sigma_j\Bigr].
  \label{eq:H}
\end{equation}
$\Phi$ is EB if and only if $J(\Phi)$ is separable \cite{HSR2003}, and for two qubits this
holds if and only if $H \ge 0$ \cite{Peres1996,Horodecki1996}. We therefore read the
single number $\lmin(H)$: negative means the channel preserves some entanglement (NPT),
non-negative means it is EB (PPT). A two-qubit partial transpose has at most one negative
eigenvalue \cite{Sanpera1998}, so $-\lmin$ is the negativity of the Choi state whenever it
is positive. A cheaper condition on the pair alone, $\|A\|_*^2 + |\mathbf c|^2 \le 1$ for
every qubit EB channel, with $\|\cdot\|_*$ the nuclear norm, is necessary but not sufficient.
It is the covariance-matrix criterion \cite{Gittsovich2008} read on the Choi state, and it
is sharp \cite{EBnote}. It can screen candidate channels but cannot replace the test on
$H$.

EB channels form a two-sided ideal: if $\Phi$ is EB then so are $\Psi\circ\Phi$ and
$\Phi\circ\Psi$ for every channel $\Psi$ \cite{HSR2003}. Hence if $\Phi^n$ is EB so is
$\Phi^{m}$ for every $m > n$, and if $\Phi^n$ is not EB neither is $\Phi^m$ for $m < n$.
The readings of a power sequence are monotone: NPT up to round $\nEB - 1$, PPT from round
$\nEB$ on. This is what lets an experiment read the integer from a few rounds, and it is
also a test. Monotonicity survives rounds that differ but are memoryless, because the
$(n+1)$-round channel is then $\Psi\circ\mathcal E_n$ for some channel $\Psi$. On
hardware, however, each $n$ is a separate circuit. A revival of entanglement at a later
round therefore witnesses memory carried between rounds, a non-Markovian effect
\cite{Rivas2010,Rivas2014}, or a difference between circuits of different depth, such as
drift or depth-dependent crosstalk.

\subsection{Collision channels with diagonal ancillas}\label{sec:collision}

One round lets the system qubit $M$ interact with $m$ ancillas through a fixed unitary $U$,
after which the ancillas are discarded (or reset and reused):
\begin{equation}
  \Phi(\rho) \;=\; \Tr_{\mathrm{anc}}\bigl[\,U\bigl(\rho\otimes\tau^{\otimes m}\bigr)U^\dagger\bigr],
  \qquad
  \tau \;=\; \tau_S(p) \;=\; \tfrac12\bigl(\mathbb{I} + p\,S\bigr),
  \label{eq:round}
\end{equation}
where $S$ is a Pauli operator, the bath axis, and $p\in[-1,1]$ the polarisation. For a
bath of splitting $\hbar\omega$ at temperature $T$ along $S = Z$, $p = \tanh(\hbar\omega/2kT)$.
The entries of $(A, \mathbf c)$ are polynomials of degree $m$ in $p$. The key property
of diagonal ancillas is the following.

\begin{lemma}[exact programming of the bath]\label{lem:mix}
Let $\ket{0_S}, \ket{1_S}$ be the eigenbasis of $S$, with $\tau_S(p) =
q_0\ket{0_S}\bra{0_S} + q_1\ket{1_S}\bra{1_S}$, $q_{0,1} = (1\pm p)/2$. For a
configuration $\mathbf b \in \{0,1\}^{mn}$ of the ancilla basis states over $n$ rounds,
let $\mathcal E_{n,\mathbf b}$ be the $n$-round channel on $M$ with those pure ancilla
states. Then
\begin{equation}
  \Phi^n \;=\; \sum_{\mathbf b} w_p(\mathbf b)\, \mathcal E_{n,\mathbf b},
  \qquad
  w_p(\mathbf b) \;=\; \prod_{i=1}^{mn} q_{b_i}.
  \label{eq:mix}
\end{equation}
\end{lemma}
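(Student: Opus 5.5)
The plan is to prove the identity by linearity in the ancilla state, round by round. First I would write the $n$-round channel explicitly. Each round uses fresh ancillas in the product state $\tau^{\otimes m}$, or freshly reset ones, which gives the same state by the repeated-interaction assumption. So $\Phi^n$ is a composition of $n$ copies of the map in Eq.~(\ref{eq:round}). Equivalently, it is a single dilation
\begin{equation*}
  \Phi^n(\rho) \;=\; \Tr_{\mathrm{anc}}\bigl[\,U_n\cdots U_1\bigl(\rho\otimes\tau^{\otimes mn}\bigr)U_1^\dagger\cdots U_n^\dagger\bigr],
\end{equation*}
where $U_k$ acts as $U$ on $M$ and the $k$-th block of $m$ ancillas, and as the identity elsewhere. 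I would justify this rewriting by noting that the ancillas of round $k$ are untouched by $U_1,\dots,U_{k-1}$. The partial trace over the round-$k$ ancillas can therefore be postponed to the end, because later unitaries act trivially on them.

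The second step expands the ancilla state. Since $\tau = q_0\ket{0_S}\bra{0_S} + q_1\ket{1_S}\bra{1_S}$, multiplying out the tensor product gives
\begin{equation*}
  \tau^{\otimes mn} \;=\; \sum_{\mathbf b\in\{0,1\}^{mn}} \Bigl(\prod_{i=1}^{mn} q_{b_i}\Bigr)\, \ket{\mathbf b_S}\bra{\mathbf b_S}.
\end{equation*}
The map $X \mapsto \Tr_{\mathrm{anc}}[W(\rho\otimes X)W^\dagger]$ is linear in $X$, with $W = U_n\cdots U_1$. Substituting the expansion therefore yields $\Phi^n(\rho) = \sum_{\mathbf b} w_p(\mathbf b)\,\Tr_{\mathrm{anc}}[W(\rho\otimes\ket{\mathbf b_S}\bra{\mathbf b_S})W^\dagger]$.

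Finally I would identify each summand with $\mathcal E_{n,\mathbf b}(\rho)$. This uses the same postponement argument in reverse: with the pure product ancilla state $\ket{\mathbf b_S}$, the dilation is exactly the $n$-round channel in which round $k$ uses the pure states $b_{(k-1)m+1},\dots,b_{km}$. The identity holds for every $\rho$, so Eq.~(\ref{eq:mix}) holds as an equality of channels. The weights are nonnegative and sum to $(q_0+q_1)^{mn}=1$, so the right-hand side is a convex combination.

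The only point needing care is the first step, namely that the composition of rounds equals a single dilation with all ancillas traced at the end. I would settle it by induction on $n$. For the inductive step, write $\Phi^{n+1} = \Phi\circ\Phi^n$, note that $U_{n+1}$ commutes with the partial trace over the earlier ancillas, and use that tracing out disjoint subsystems can be done in either order. Everything else is multilinear bookkeeping. The diagonality of $\tau$ in the $S$ basis is what makes the expansion a classical mixture with no cross terms; I would point out that this is exactly where the diagonal-ancilla hypothesis enters.
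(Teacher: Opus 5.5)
Your proposal is correct and follows the paper's own proof. Both write the $n$-round channel as a single dilation on $\rho\otimes\tau^{\otimes mn}$, with the round unitaries acting on disjoint ancilla blocks, expand $\tau^{\otimes mn}$ into the weighted projectors $\ket{\mathbf b_S}\bra{\mathbf b_S}$, and conclude by linearity in the ancilla state. Your induction on $n$ for the single-dilation form, and your explicit identification of each summand with $\mathcal E_{n,\mathbf b}$, spell out steps the paper takes as evident.
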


\begin{proof}
With fresh (or perfectly reset) ancillas, the $n$-round channel is
$\rho\mapsto\Tr_{\mathrm{anc}}[V(\rho\otimes\tau^{\otimes mn})V^\dagger]$ with $V$ the
product of the $n$ round unitaries on disjoint ancilla sets. It is linear in the ancilla
state, and $\tau^{\otimes mn} = \sum_{\mathbf b} w_p(\mathbf b)\,
\ket{\mathbf b_S}\bra{\mathbf b_S}$.
\end{proof}

On hardware, each configuration is a basis-state preparation, one single-qubit rotation per
ancilla. The expectation values of every $\mathcal E_{n,\mathbf b}$ are estimated, and the
analysis combines them with the exact weights $w_p(\mathbf b)$. The polarisation therefore
carries no sampling error and needs no calibration beyond that of the basis-state
preparation. At $p = \pm1$ one configuration suffices. At $p = 0$ all $2^{mn}$
configurations enter with equal weight. In between, shots are allocated to configurations
in proportion to their weights, with at least one replica each. Exact weighting needs all
$2^{mn}$ configurations and is practical only for small $mn$ (at most $2^6 = 64$ in the
worked example). For many rounds one draws configurations at random with probabilities
$w_p(\mathbf b)$, which prepares $\tau^{\otimes mn}$ exactly on average but adds sampling
error to the weights. The same reasoning covers
any schedule of polarisations $(p_1,\dots,p_n)$ across rounds, since
$\Phi_{p_n}\circ\cdots\circ\Phi_{p_1}$ has the pair $A(p_n)\cdots A(p_1)$ and
$\sum_j A(p_n)\cdots A(p_{j+1})\mathbf c(p_j)$.

\subsection{Two noise mechanisms}\label{sec:noise}

Two imperfections change the channel in a way the analysis can follow exactly.

\emph{Depolarising noise on $M$.} If each round is followed by
$\rho\mapsto(1-\lambda)\rho + \lambda\,\mathbb{I}/2$, the pair becomes
$((1-\lambda)A, (1-\lambda)\mathbf c)$, again a channel of the same kind. A single final
depolarisation maps $H$ to $(1-\lambda)H + \lambda\,\mathbb{I}/4$ and so raises $\lmin$,
since $\lmin(H)\le 1/4$. Under per-round depolarisation the powers of the noisy round are
still monotone in $n$, but it is not automatic that $\lmin$ at fixed $n$ increases with
$\lambda$. We check this at every target.

\emph{Preparation and reset error of the ancillas.} Suppose the reset leaves an ancilla in
the wrong basis state with probability $e$, independently of the intended state. The
prepared state is then $\tau_S((1-2e)p)$ instead of $\tau_S(p)$, so the channel is the
ideal channel at the effective polarisation
\begin{equation}
  p_{\mathrm{eff}} \;=\; (1-2e)\,p .
  \label{eq:peff}
\end{equation}
Two consequences are used below. The unpolarised bath $p = 0$ is immune to reset error,
which makes it a control. At $p\ne0$ the first-order change is
$\Delta\lmin \approx -2e\,p\,\partial\lmin/\partial p$, whose sign varies from channel to
channel. The model assumes that a reset fails independently of the state before the reset.
That state is correlated with $M$, so a state-dependent failure would couple the rounds.
The composed-versus-direct test of Sec.~\ref{sec:protocol} could detect it.

\section{The protocol}\label{sec:protocol}

The protocol fixes, before any data are taken, a target channel (the circuit $U$, the bath
axis $S$ and polarisation $p$), the set of rounds $\mathcal N$ to be measured, the shots
per setting, and the decision threshold. It then runs six steps.

\paragraph{1. Circuits.} For every $n \in \mathcal N$ and every configuration $\mathbf b$
with its shot allocation, prepare $M$ in one of the six Pauli eigenstates
$P_s\in\{z_\pm, x_\pm, y_\pm\}$, apply $n$ compiled rounds, with the ancillas reset and
re-prepared between rounds, and measure $M$ in the $x$, $y$ or $z$ basis
(Fig.~\ref{fig:circuit}). This gives $18$ settings per configuration. Each job also
contains a readout calibration of $M$ (prepare $\ket0$ and $\ket1$, measure). The pilot
job also contains a reset check of each ancilla qubit: prepare $\ket1$, reset, measure.
The fraction of outcomes $1$ includes the ancilla readout error. The pilot therefore also
calibrates the readout of each ancilla and corrects the fraction for it.

\begin{figure}[t]
\centering
\includegraphics[width=\textwidth]{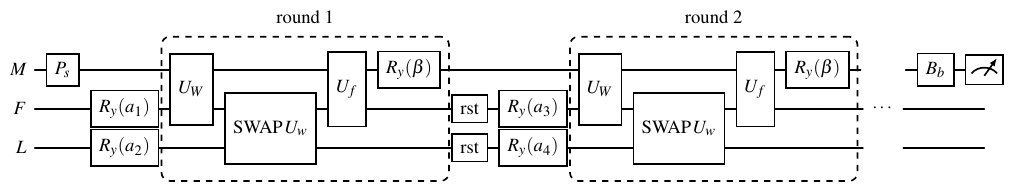}
\caption{Circuits of the protocol, shown for the worked example of
Sec.~\ref{sec:example}. The message qubit $M$ is prepared in a Pauli eigenstate $P_s$,
passes $n$ rounds, and is measured after a basis change $B_b$. The ancillas $F$ and $L$
receive a basis-state preparation $R_y(a_i)$ in every round and a mid-circuit reset
($\mathrm{rst}$) before every round after the first. The angles are $a_i\in\{0,\pi\}$ for a
$Z$ bath and $a_i\in\{\pm\pi/2\}$ for an $X$ bath, fixed by the configuration $\mathbf b$
and bound at run time. On a heavy-hex path the three blocks cost $2$, $3$ and $3$ CZ gates
(Sec.~\ref{sec:compile}).}
\label{fig:circuit}
\end{figure}

\paragraph{2. Estimation.} Invert the readout of $M$ with the calibration. For each
setting combine the configurations with the exact weights of Lemma~\ref{lem:mix}. This
gives the output Bloch vectors $\mathbf r_s$ for the six inputs, and linear inversion
\cite{ChuangNielsen1997} gives
\begin{equation}
  \mathbf c_n = \tfrac12(\mathbf r_{z+} + \mathbf r_{z-}),\qquad
  A_n\mathbf e_a = \tfrac12(\mathbf r_{a+} - \mathbf r_{a-}),\quad a\in\{x,y,z\},
  \label{eq:inversion}
\end{equation}
and then $\lmin$ of Eq.~\eqref{eq:H}. A parametric bootstrap \cite{Efron1994} gives its
standard deviation $\sigma$. It resamples every count, including the readout
calibration, as binomial with the observed frequency, and uses $1000$ replicas. Linear
inversion can return a pair slightly outside the set of channels. That does no harm here,
because $\lmin$ is a statistic, not a state.

\paragraph{3. Readings.} Each block $(S, p, n)$ is read as NPT if $\lmin \le -5\sigma$,
PPT if $\lmin \ge 5\sigma$, and unresolved otherwise. The threshold is fixed in advance.

\paragraph{4. Index.} By the ideal property (Sec.~\ref{sec:choi}) the index is $k$ when
round $k-1$ reads NPT and round $k$ reads PPT. Rounds below an NPT reading and above a
PPT reading need not be measured. If a round between the last NPT and the first PPT is
unresolved, the result is the bracket ``$k$ to $k'$''. A PPT reading below an NPT reading
is inconsistent with a power sequence. It stops the analysis and is reported as evidence
of memory between rounds or of circuit-to-circuit differences (Sec.~\ref{sec:choi}).

\paragraph{5. Composed versus direct.} Round 1 starts from freshly initialised ancillas
and later rounds start from reset ones. The prediction for $n$ rounds is therefore the
measured single-round map $\hat\Phi_1$, with pair $(A_1, \mathbf c_1)$, followed by $n-1$
copies of the shifted map $\tilde\Phi_1$, with pair $(A_1 + \Delta A, \mathbf c_1 + \Delta\mathbf c)$.
Here $(\Delta A, \Delta\mathbf c)$ is the model change of the single-round pair from $p$ to
$p_{\mathrm{eff}}$ at the measured reset error. For each $n\ge2$ compute
\begin{equation}
  D_n \;=\; \lmin\bigl[\text{direct $n$-round estimate}\bigr]
  \;-\; \lmin\bigl[\tilde\Phi_1^{\,n-1}\circ\hat\Phi_1\bigr],
  \label{eq:Dn}
\end{equation}
with the standard deviation from the two bootstraps. The pre-registered pass criterion is
$|D_n| < 3\sigma_D$ for every block. A failure means that the $n$-round channel is not the
composition of single rounds as modelled. The possible causes are memory carried across
the reset, a reset failure that depends on the state, heating by repeated resets,
crosstalk that grows with depth, or drift. A repeat of selected blocks later in
the session separates drift from the rest. In either case the index readings stand,
because they come from the direct measurement.

\paragraph{6. Pilot and shots.} With $18$ settings, a readout error near one per cent,
and $S$ shots per setting, the tomographic standard deviation of $\lmin$ is close to
\begin{equation}
  \sigma \;\approx\; \kappa/\sqrt{S},\qquad \kappa \approx 0.24\ \text{to}\ 0.33,
  \label{eq:sigma}
\end{equation}
across all blocks of the worked example, noisy or ideal. Reading a margin $m$ at $5\sigma$
with $95$ per cent power needs an expected $|z| \ge 5 + 1.645$, hence
\begin{equation}
  S_{\mathrm{req}} \;\approx\; \bigl(6.645\,\kappa/m\bigr)^2 \;\approx\; 3.5/m^2 \quad (\kappa = 0.28).
  \label{eq:Sreq}
\end{equation}
A margin of $0.03$ costs about $4\times10^{3}$ shots per setting, and a margin of $0.005$
about $1.4\times10^{5}$. Because the device's margins are not known in advance, a short
pilot measures the single-round maps first, together with the readout and reset errors.
It predicts each registered reading at $n\ge2$ from the measured single-round map. Round 1
uses the map as measured. Later rounds use it shifted by the model change from $p$ to
$p_{\mathrm{eff}}$ of Eq.~\eqref{eq:peff}, since they start from reset ancillas. The
pilot then applies a fixed rule:
\begin{itemize}
\item a block whose reading is not secured at the registered shots is raised to the
      smallest power of two that secures it, up to a cap. A non-core reading that the
      prediction cannot secure even at the cap is raised to the cap anyway, because the
      prediction inherits the tomographic error of the single-round map, which is of order
      $\sigma_D$;
\item the run goes ahead (GO) if every core reading is secured and the raised run fits
      the budget;
\item otherwise (NO-GO) the pilot is repeated on another qubit layout. Targets and
      thresholds are never changed after data have been seen.
\end{itemize}

\section{Design rules}\label{sec:design}

\subsection{A step has two sides}\label{sec:twosided}

To read $\nEB = k$ an experiment must resolve round $k-1$ as NPT and round $k$ as PPT. The
decision margin of a step is therefore the pair
\begin{equation}
  m_- = -\lmin(H_{k-1}), \qquad m_+ = \lmin(H_k),
  \label{eq:margins}
\end{equation}
and the shot cost is set by the smaller of the two through Eq.~\eqref{eq:Sreq}. Version 2
of this record \cite{PaperVv2} listed $m_-$, with only a sign certificate for the PPT side. Table~\ref{tab:twosided} adds the
certified $m_+$ for its twelve decision points and prices both sides. Only two points have
both sides readable below $10^4$ shots per setting. Seven need $10^5$ to $7\times10^6$ on
the PPT side, even on ideal hardware. At every listed point the PPT side is the thinner one. The primary
target of version 2, $g = 0.85$ at $p = 0.43$, is the most lopsided: $m_- = 0.109$ but
$m_+ = 1.06\times10^{-3}$.

\begin{table}[t]
\caption{Two-sided decision margins of the staircase targets of version 2 \cite{PaperVv2}
(coupling family of Sec.~\ref{sec:loop}; grid polarisations as stored in binary64) and of
the worked example (last three rows). $m_\mp$ are certified (two-sided enclosures of
relative width about $10^{-9}$; values rounded as shown). $S_\mp$ are the shots per setting
needed to read each side at $5\sigma$ with $95$ per cent power on the ideal channel
(float64 tomography simulation, readout error $0.01$).}
\label{tab:twosided}
\centering
\footnotesize
\begin{tabular}{@{}llccccrr@{}}
\toprule
$g$ & bath & $p$ & $k$ & $m_-$ & $m_+$ & $S_-$ & $S_+$\\
\midrule
0.85 & $Z$ & 0.430 & 3 & 0.1086 & 0.00106 & 280 & 3\,490\,760\\
0.85 & $Z$ & 1     & 4 & 0.0414 & 0.01480 & 1\,830 & 13\,002\\
0.70 & $Z$ & 0.747 & 5 & 0.0519 & 0.00452 & 1\,320 & 185\,635\\
0.70 & $Z$ & 1     & 6 & 0.0328 & 0.00177 & 2\,917 & 922\,306\\
0.55 & $Z$ & 0.683 & 8 & 0.0342 & 0.00073 & 3\,265 & 7\,022\,614\\
0.55 & $Z$ & 0.873 & 9 & 0.0246 & 0.00188 & 5\,832 & 968\,221\\
1    & $Z$ & 1     & 3 & 0.0452 & 0.03709 & 1\,653 & 2\,265\\
1    & $X$ & 0.873 & 3 & 0.0953 & 0.00241 & 366 & 542\,934\\
1    & $X$ & 1     & 4 & 0.0563 & 0.00443 & 952 & 142\,784\\
1    & $Y$ & 0.620 & 3 & 0.0883 & 0.00672 & 448 & 70\,746\\
1    & $Y$ & 0.747 & 4 & 0.0315 & 0.02201 & 3\,097 & 5\,777\\
1    & $Y$ & 0.873 & 5 & 0.0249 & 0.00565 & 4\,213 & 67\,971\\
\midrule
1.2  & $X$ & 1     & 3 & 0.0721 & 0.00561 & 569 & 89\,746\\
1.2  & $Z$ & 1     & 2 & 0.1146 & 0.03623 & 264 & 2\,345\\
1.2  & any & 0     & 2 & 0.0925 & 0.09303 & 436 & 495\\
\botrule
\end{tabular}
\end{table}

\subsection{Noise thickens one side and thins the other}\label{sec:asymmetry}

At every target of the worked example, per-round depolarising noise and reset error raise
$\lmin$ (checked in float64 on a grid of $\lambda\le0.12$ and $e\le0.1$).
The NPT side of a step is eaten, and the PPT side thickens. The two sides of
Table~\ref{tab:twosided} therefore play different roles on hardware. A target should have
a large ideal $m_-$, because that side is fragile. It can live with a thin ideal $m_+$, as
long as the expected noise or a raised shot count covers it. The contrast between two
index-3 targets makes the point. At $g = 1$ on the $Z$ bath both ideal sides are
comfortable ($0.045$ and $0.037$), but in the model of Sec.~\ref{sec:tolerance} the NPT
side survives only a per-round contraction up to $0.025$ (at $4096$ shots and $e = 0$). At $g = 1.2$ on the $X$ bath the
ideal PPT side is thin ($0.0056$), but the NPT side survives up to $0.07$, and noise
thickens the thin side.

\subsection{Tolerance of the worked example}\label{sec:tolerance}

Table~\ref{tab:tolerance} gives, for the readings of the worked example, the range of
per-round depolarising contraction $\lambda$ over which each reading holds with $95$ per
cent power at $5\sigma$. The model includes the reset error $e$ of
Eq.~\eqref{eq:peff}, readout error $0.01$ and a one-time preparation-and-measurement
contraction of $0.01$. Two things stand out. The only fragile core reading is the X-bath
NPT reading at $n=2$, and reset error is what makes it fragile: the ideal X-bath index
drops from 3 to 2 once $p_{\mathrm{eff}}$ falls below about $0.85$. And the $Z$-bath and
$p = 0$ readings hold over the whole scanned range. The model is isotropic and therefore
pessimistic. In the stressed simulation of Sec.~\ref{sec:sims}, with a fitted contraction
of $0.052$ and reset errors near $0.01$, the X-bath $n=2$ reading still came out NPT at
$-10.7\sigma$. The pilot rule for this reason works from measured single-round maps, not
from this table.

\begin{table}[t]
\caption{Noise tolerance of the readings of the worked example (planning model, float64).
Entries give the range of per-round depolarising contraction $\lambda$ (scanned from $0$
to $0.12$ in steps of $0.005$) over which the reading holds with $95$ per cent power at
$5\sigma$, for ancilla reset error $e$. ``All'' means the whole scanned range. The model
applies $e$ to every round, including the first. That is conservative for the NPT rows
and optimistic for the PPT rows.}
\label{tab:tolerance}
\centering
\footnotesize
\begin{tabular}{@{}lcccc@{}}
\toprule
reading (shots per setting) & $e=0$ & $e=0.01$ & $e=0.02$ & $e=0.03$\\
\midrule
$X$, $p=1$, $n=2$ NPT (4096) & $\le 0.070$ & $\le 0.050$ & $\le 0.035$ & $\le 0.020$\\
$X$, $p=1$, $n=3$ PPT (4096) & $\ge 0.035$ & $\ge 0.020$ & all & all\\
$X$, $p=1$, $n=3$ PPT (16384) & $\ge 0.010$ & all & all & all\\
$Z$, $p=1$, $n=1$ NPT and $n=2$ PPT (4096) & all & all & all & all\\
$p=0$, $n=1$ NPT and $n=2$ PPT (4096) & all & all & all & all\\
\midrule
$g=0.85$, $Z$, $p=1$, $n=3$ NPT (16384) & $\le 0.030$ & $\le 0.025$ & $\le 0.020$ & $\le 0.020$\\
$g=0.85$, $Z$, $p=1$, $n=4$ PPT (4096) & $\ge 0.015$ & $\ge 0.010$ & $\ge 0.005$ & all\\
$g=0.85$, $Z$, $p=0.43$, $n=3$ PPT (8192) & $\ge 0.030$ & $\ge 0.030$ & $\ge 0.030$ & $\ge 0.030$\\
\botrule
\end{tabular}
\end{table}

The last three rows show why the $g = 0.85$ staircase ($3$ at $p = 0.43$, $4$ at $p=1$) is
not a good first target. With perfect resets its $p = 1$ readings hold for $\lambda$
between about $0.015$ and $0.030$, and its $p = 0.43$ PPT reading needs $\lambda \ge 0.030$:
the whole staircase is readable only in a narrow window near $\lambda \approx 0.03$. With
$e \ge 0.01$ it is readable at no noise level.

\section{Worked example}\label{sec:example}

\subsection{The round}\label{sec:loop}

The channel is one round of a three-qubit loop in which the message qubit $M$ meets two
ancillas, $F$ and $L$ ($m = 2$), both prepared in $\tau_S(p)$ \cite{PaperI}. With
$R_y(a) = e^{-iaY/2}$ and registers ordered $(M, F, L)$,
\begin{equation}
  U \;=\; \bigl(R_y(\beta)\otimes\mathbb{I}\otimes\mathbb{I}\bigr)\,U_f(\varphi)\,U_w(\kappa)\,U_W(\theta),
  \label{eq:U}
\end{equation}
where
\begin{align*}
  U_W(\theta) &= \ket0\bra0_M\otimes R_y(\pi-2\theta)_F + \ket1\bra1_M\otimes R_y(2\theta)_F,\\
  U_w(\kappa) &= \exp(-i\tfrac{\kappa}{2} Z_F\otimes Y_L),\\
  U_f(\varphi) &= \exp(-i\varphi\,\mathrm{SWAP}_{ML}).
\end{align*}
$U_W$ writes $M$ into $F$ by a controlled rotation, $U_w$ couples $F$ to $L$, and the
partial swap $U_f$ returns part of $L$ to $M$. We use the one-parameter family
\begin{equation}
  \theta = \tfrac{\pi}{4} + g\,(\theta_0 - \tfrac{\pi}{4}),\quad \varphi = g\varphi_0,\quad
  \kappa = g\kappa_0,\quad \beta = \beta_0,
  \label{eq:family}
\end{equation}
with $(\theta_0,\varphi_0,\kappa_0,\beta_0) = (0.16345853,\, 0.20061939,\,
0.4323098,\, 0.23903823)\times\pi$, the certified reference point of
Refs.~\cite{PaperI,PaperII}, and take $g = 1.2$. At the reference point ($g = 1$) with a $Z$
bath the index is $3$ at every polarisation $p\in[0,1]$, certified over the continuum in
Ref.~\cite{PaperII}. That reference also studies the infinite-temperature floor of this
loop and its thermal valleys.

\subsection{Targets}\label{sec:targets}

Table~\ref{tab:targets} lists the certified ideal values of $\lmin$ for three baths. The
certification uses $256$-bit ball arithmetic \cite{Arb2017}. The upper bound is the
Rayleigh quotient of a proposed eigenvector. The lower bound is Sylvester's criterion on
$H - s\,\mathbb{I}$, which proves $\lmin > s$ (Appendix~\ref{app:cert}).
\begin{itemize}
\item The $X$ bath at $p = 1$ has index $3$ and is the headline reading.
\item The $Z$ bath at the same population has index $2$. The contrast between the two is
      a measurable instance of the dependence of the index on bath direction discussed in
      version 2.
\item The unpolarised bath has index $2$ and is the reset-immune control.
\end{itemize}
The core readings are X-bath $n = 2$ NPT and, for the other two baths, $n=1$ NPT and
$n=2$ PPT. The X-bath $n=3$ PPT reading completes the headline, and X-bath $n=4$ is a
backstop that brackets the index if $n = 3$ is unresolved.

\begin{table}[t]
\caption{Certified ideal $\lmin$ of the $n$-round channel at $g = 1.2$ (enclosures of
relative width about $10^{-9}$, rigorous but not tight; six digits shown).}
\label{tab:targets}
\centering
\footnotesize
\begin{tabular}{@{}lccccc@{}}
\toprule
bath & $n = 1$ & $n = 2$ & $n = 3$ & $n = 4$ & $\nEB$\\
\midrule
$X$, $p = 1$ & $-0.221643$ & $-0.072137$ & $+0.005611$ & $+0.043112$ & 3\\
$Z$, $p = 1$ & $-0.114649$ & $+0.036231$ & $+0.091629$ & & 2\\
$p = 0$      & $-0.092454$ & $+0.093030$ & $+0.178514$ & & 2\\
\botrule
\end{tabular}
\end{table}

\subsection{Compilation}\label{sec:compile}

The three two-qubit blocks have analytic local-equivalence classes
\cite{Makhlin2002,Zhang2003}. $U_W$ and $U_w$ each have a single nonzero canonical
coordinate, $|\pi/4-\theta|$ and $\kappa/2$ (reduced to $\pi/2-\kappa/2$ when
$\kappa/2 > \pi/4$, as at $g = 1.2$), and need exactly two \cnot{}s. The partial swap
has three equal nonzero coordinates $\varphi/2$ and needs three \cnot{}s
\cite{VatanWilliams2004}. A round therefore costs seven \cnot{}s with all-to-all
connectivity, or five native arbitrary-angle two-qubit gates where these exist.

On a heavy-hex lattice \cite{Chamberland2020} the three qubits sit on a path $M$--$F$--$L$
with $F$ in the middle, and $U_f$ acts on the two ends. Routing it would cost a SWAP. The
SWAP can instead be merged into the preceding block: $\mathrm{SWAP}_{FL}\,U_w$ is a
three-\cnot{} gate. After it the state of $L$ sits on the middle qubit, and $U_f$ acts on
two neighbours. The compiled round is
$\mathrm{SWAP}_{FL}\cdot U$: $M$ never moves, and the swap of the two ancillas is
irrelevant because both are reset before the next round. The cost is $2 + 3 + 3 = 8$ CZ
gates per round. Version 2 stated ten, which counts the SWAP separately.

We verified the compiled circuits exactly in noiseless simulation (Qiskit
\cite{Qiskit2024}), resets included. For $n$ up to 3 the channel on $M$ agrees with the
model to about $10^{-16}$. Transpiled for the heavy-hex device ibm\_kingston (156-qubit
Heron r2), each round costs eight CZ gates. On the path $(21, 22, 23)$, before echo
padding, the three-round circuit has depth $113$ with four mid-circuit resets. An X--X echo \cite{ViolaKnillLloyd1999} is padded on $M$ in
idle windows, mainly during the resets. The padded, parametric circuits still reproduce
the exact channel to about $10^{-15}$. Mid-circuit reset \cite{Corcoles2021} keeps the
qubit count at three for any $n$. Fresh ancillas would cost $2n+1$ qubits.

\subsection{Budget}\label{sec:budget}

The run has three jobs:
\begin{itemize}
\item a pilot J1: $n = 1$ for the three baths, plus readout and reset checks;
\item the main job J2: $n = 2,3,4$ for the $X$ bath and $n = 2,3$ for the other two;
\item a repeat J3 of the $X$-bath $n = 2, 3$ blocks later in the same session.
\end{itemize}
All blocks of J1 to J3 start at $4096$ shots per setting (Table~\ref{tab:budget}).

The processor time is estimated with a linear cost model, fitted to the author's earlier
jobs on the same device: a per-binding overhead of $5.7$~ms plus $0.27$~ms per shot, with
a $5$ per cent margin. The three jobs cost $273$~s, or $397$~s with the raise the pilot
asked for in the nominal simulation. That is a few hundred seconds of processor time. An optional
job J4 for the $g = 0.85$ staircase costs a further $199$~s.

\begin{table}[t]
\caption{Jobs of the worked example at the registered shots ($4096$ per setting, except
in J4, which uses $16384$ for $p=1$, $n=3$ and $8192$ for $p=0.43$, $n=3$). A
binding is one circuit with one set of ancilla angles. The QPU seconds are estimates from
the cost model of Sec.~\ref{sec:budget}.}
\label{tab:budget}
\centering
\footnotesize
\begin{tabular}{@{}llrrr@{}}
\toprule
job & content & bindings & shots & QPU s\\
\midrule
J1 pilot & $n=1$, three baths; readout of $M$, $F$, $L$; reset check & 116 & 253\,952 & 72\\
J2 main & $X$: $n=2,3,4$; $Z$ and $p=0$: $n=2,3$; readout & 1\,532 & 524\,288 & 157\\
J3 repeat & $X$: $n=2,3$; readout & 38 & 155\,648 & 44\\
\midrule
J4 optional & $g=0.85$, $Z$: $p=1$, $n=3,4$; $p=0.43$, $n=2,3$ & 3\,890 & 623\,360 & 199\\
\botrule
\end{tabular}
\end{table}

\subsection{Simulated runs}\label{sec:sims}

We ran the full pipeline on the Qiskit noise model of ibm\_kingston
(FakeKingston, qiskit-ibm-runtime 0.49), on the path $(140, 141, 142)$. That path ranked
first under a per-round cost proxy computed from a calibration snapshot of the device. Two
scenarios, meant to fall on either side of the device's actual noise, test the rules:
\begin{itemize}
\item \emph{nominal}: the noise model as shipped;
\item \emph{stressed}: in addition, two-qubit depolarising noise of $0.004$ on every CZ of
      the path and a bit flip of probability $0.01$ after every reset.
\end{itemize}
Each scenario passed through the same pilot rule and analysis as a hardware run would
(Table~\ref{tab:sims}, Fig.~\ref{fig:staircase}).

\begin{table}[t]
\caption{Simulated runs (FakeKingston noise model, path $(140,141,142)$). The pilot
columns give the fitted single-round contraction $\eta$, the measured reset errors of $F$
and $L$, and the pilot decision. The index columns give the readings at $5\sigma$. The
reset errors are corrected for the ancilla readout. The last column is the largest
$|D_n|/\sigma_D$ over all $n\ge2$ blocks. The stressed scenario ran J1 and J2 only.}
\label{tab:sims}
\centering
\footnotesize
\setlength{\tabcolsep}{4pt}
\begin{tabular}{@{}lccccccc@{}}
\toprule
scenario & $\eta$ & $e_F$, $e_L$ & pilot & $X$, $p=1$ & $Z$, $p=1$ & $p=0$ & max $|D_n|$\\
\midrule
nominal, 4096 shots & $-0.001$ & 0.005, 0.000 & GO, raise & 3 to 4 & 2 & 2 & $1.4\sigma$\\
nominal, raised & $-0.001$ & 0.005, 0.000 & GO & 3 & 2 & 2 & $1.9\sigma$\\
stressed & 0.052 & 0.008, 0.012 & GO & 3 & 2 & 2 & $2.1\sigma$\\
\botrule
\end{tabular}
\end{table}

\begin{figure}[t]
\centering
\includegraphics[width=\textwidth]{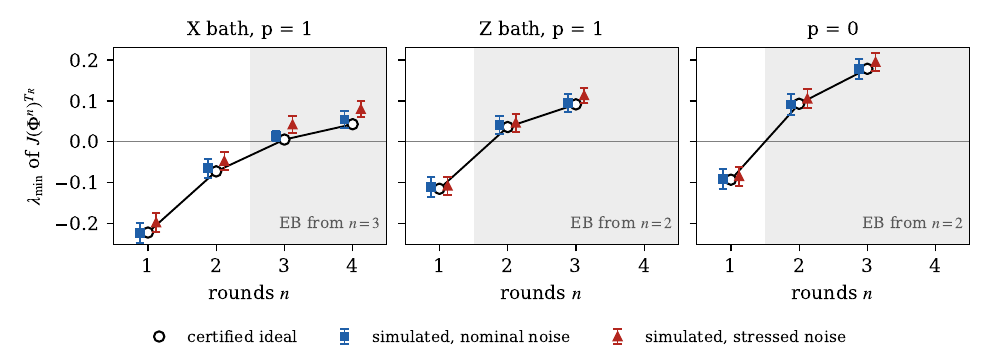}
\caption{Certified ideal $\lmin$ (open circles) and simulated readings with
$\pm5\sigma$ bars, for nominal noise with the pilot's raise (squares) and stressed noise
(triangles). The shaded region is where the ideal channel is EB. A reading counts as PPT
or NPT when its whole bar lies above or below zero.}
\label{fig:staircase}
\end{figure}

\emph{Nominal.} The pilot found a contraction consistent with zero and predicted the
$X$-bath $n = 3$ PPT reading at only $1.6\sigma$ at $4096$ shots, too thin to secure even
at the cap. As the rule prescribes for a non-core reading, it raised that block and its
repeat to the cap of $16384$. That brings J2 to $219$~s and J3 to $106$~s, $397$~s in all.
At the registered shots the $n = 3$ block read $+3.0\sigma$ (unresolved), and the index
was the bracket ``3 to 4''. With the raise it read $+5.4\sigma$, the repeat read
$+7.9\sigma$, and the index was 3. The measured margin, $0.014$, was twice the pilot's
prediction of $0.006$, a difference within the prediction's own uncertainty. The weakest
reading of the headline sits at low noise, as Sec.~\ref{sec:asymmetry} predicts.

\emph{Stressed.} The fitted contraction was $0.052$. The reset errors were $0.008$ and
$0.012$ after the readout correction, against $0.024$ and $0.016$ as read. No raise was
needed. The $X$-bath readings were $-10.7\sigma$ at $n = 2$ and $+10.3\sigma$ at $n = 3$,
giving index 3. Noise had thickened the thin side and eaten
part of the robust one, and both still read.

\emph{Composed versus direct.} In both scenarios no $|D_n|$ exceeded $2.1\sigma$. The
simulated rounds are memoryless and differ only by the modelled reset error, so this is
the expected result. The reset shift matters: without it, the stressed X-bath blocks at
$n = 3$ and $4$ would have given $D_n$ of $+1.5\sigma$ and $+2.1\sigma$, against $+0.4\sigma$
and $+0.9\sigma$ with it, while at $n = 2$ the shift moved $D_n$ from $-0.6\sigma$ to
$-1.4\sigma$. The uncertainty of the measured $e$ is not included in $\sigma_D$. It adds
about $0.002$ to $D_n$ here. The test is a null test here. On hardware it is a measurement.

\emph{The $g = 0.85$ staircase.} Run under nominal noise, J4 read both NPT sides clearly
($-11.9\sigma$ at $p=1$, $n=3$, and $-21.6\sigma$ at $p=0.43$, $n=2$). Both PPT sides
were unresolved ($+3.4\sigma$ and $+3.2\sigma$), so it returned only lower bounds on the
two indices, as Table~\ref{tab:tolerance} predicts.

These simulations test the pipeline and the rules, not the hardware. What they show is
that the protocol, run exactly as registered, recovers the certified indices under two
noise levels chosen to fall on either side of what current devices show, in a model without coherent
errors, and that the pilot identifies the one reading
that needs more shots.

\section{Beyond static powers}\label{sec:beyond}

Because composition is exact for any schedule of polarisations (Sec.~\ref{sec:collision}),
the same machinery certifies the index under time-dependent baths. Version 2
\cite{PaperVv2} explores this for the loop of Sec.~\ref{sec:loop} and deposits the
certificates, which are kept in the ancillary files of this version. We summarise, because
it marks where the direct measurement stops being practical.

\emph{Thermal valleys.} Some circuits break entanglement sooner at an intermediate bath
polarisation than at either extreme; the first certified examples are in
Ref.~\cite{PaperII}. The gated random search of version 2, over $33$ million circuits, found
$10{,}628$ such valleys, and none reaches a contrast of $0.05$. The contrast has an
interior maximum near index $90$, at a certified $2.85\times10^{-3}$.

\emph{Control results.} Exact inhomogeneous composition turns a valley into a switch. At
the best valley, held inside its window at $n = 88$ rounds:
\begin{itemize}
\item a centred pulse of exactly eight rounds restores entanglement;
\item a $53$-round pulse beats the static protocol by a certified margin;
\item a $1.2$ per cent coherent tilt of the ancilla opens the switch at fixed population.
\end{itemize}

\emph{Cost.} These signals lie between about $10^{-4}$ (the minimal eight-round pulse)
and $10^{-2}$ (the tilted ancilla), mostly near $3\times10^{-3}$, over $80$ to $110$
rounds. By Eq.~\eqref{eq:Sreq} a margin of $3\times10^{-3}$ costs about $4\times10^{5}$
shots per setting for each of the $18$ settings. Version 2 estimated about $3\times10^{6}$,
from an assumed error model for two-qubit state tomography. At intermediate polarisation,
exact weighting over $2^{176}$ configurations is out of the question, and configurations
must be sampled. Within a per-round depolarising model the signals die at a per-round
contraction near $10^{-4}$. They are certified statements about the
model and a map of what better hardware could reach, not near-term measurement targets.

\section{Discussion and limitations}\label{sec:discussion}

\emph{What is certified, and what is not.} The target values of Tables~\ref{tab:twosided}
and \ref{tab:targets} are certified statements about the ideal channel. The shot
requirements, the tolerance table and the simulated runs are float64 planning estimates.
A hardware run measures the index of the channel it implements. Depolarising-type noise
and reset error move that index down or leave it unchanged at the targets of the worked
example, but coherent errors can move $\lmin$ either way. The certified ideal index is
therefore a comparison value, not a pass criterion.

\emph{Scope of the statistics.} The bootstrap covers shot noise and readout calibration,
not drift or systematic preparation error. Drift is addressed by the in-session repeat and
the composed-versus-direct test. Preparation and measurement errors of $M$ that act as
channels compose the channel with a noisy channel before and after it. By the ideal
property they can turn an NPT reading into PPT but never the reverse, so, like the other
noise, they bias the reading towards a smaller index. The readout inversion removes most
of the measurement part. An over- or under-correction in that inversion is not a channel,
however, and could push the reading either way. The reset check bounds the ancilla reset error from above unless the ancilla readout
is calibrated too, as it is in the pilot of the worked example. Maximum-likelihood tomography, or a
tomography method with certified confidence regions, could replace linear inversion
without changing the rest of the protocol.

\emph{Generality.} Nothing in Secs.~\ref{sec:setting} and \ref{sec:protocol} uses the
particular loop. The protocol applies to any qubit collision channel with ancillas
diagonal in a known basis, on any platform with mid-circuit reset, or with $mn + 1$ qubits
without it. For a system of dimension $d > 2$ the Choi matrix is $d^2\times d^2$ and a
positive partial transpose no longer implies separability. The PPT reading then bounds the
index of the PPT property, and an EB test needs a separability criterion or witness.

\emph{Corrections to version 2.} The following statements of version 2 are corrected
here.
\begin{itemize}
\item The heavy-hex count is eight CZ gates per round, not ten (Sec.~\ref{sec:compile}).
      The derived counts change accordingly: $72$ CZ gates, not $90$ \cnot{}s, for the
      deepest staircase target, and $704$, not $880$, for the $88$-round valley.
\item The decision margin of a step has two sides. At every listed target the side that
      version 2 did not list is the thinner one (Table~\ref{tab:twosided}). The shot
      range of version 2, about $2\times10^{3}$ to $4\times10^{4}$ per setting for the
      selected targets, came from its own calibration of the NPT side. In the present
      calibration the NPT sides need $280$ to $5{,}832$ shots per setting, and the PPT
      sides up to $7\times10^{6}$.
\item Exact programming of $p$ holds for fresh ancillas. With mid-circuit reuse a reset
      error shifts $p$ to $(1-2e)p$ (Eq.~\eqref{eq:peff}), so the reset does enter the
      ancilla state.
\item Version 2 said that no experiment had measured an entanglement-breaking index.
      That is not correct as stated. The data of Ref.~\cite{Cuevas2017} fix an index of
      $2$ for a photonic channel. What remains unmeasured is a larger index located by a
      direct scan over rounds.
\end{itemize}

The worked example is ready to run. It needs three qubits, eight CZ gates per round, four
rounds at most, and an estimated few hundred seconds of processor time.

\backmatter

\bmhead{Acknowledgements}
The computations used Arb through python-flint, NumPy, Qiskit and Qiskit Aer.

\section*{Declarations}

\bmhead{Funding} No external funding was received for this work.

\bmhead{Competing interests} The author has no competing interests to declare.

\bmhead{Ethics approval and consent to participate} Not applicable.

\bmhead{Consent for publication} Not applicable.

\bmhead{Data availability} No experimental data were generated. The certificates, the
simulated run records and analyses, and the complete search output of version 2 are in
the ancillary files of arXiv:2609.09350 (Appendix~\ref{app:files}).

\bmhead{Materials availability} Not applicable.

\bmhead{Code availability} The circuit builders, run and analysis scripts, the
certification scripts and the re-derivation script are in the ancillary files of
arXiv:2609.09350. The certification library from which the primitives derive is archived
at Zenodo \cite{Software}.

\bmhead{Author contribution} E.K. is the sole author. He conceived and designed the study, carried out and checked the analysis with the assistance described in Appendix~\ref{app:ai}, and wrote the manuscript.

\bmhead{Use of generative AI} See Appendix~\ref{app:ai}.

\begin{appendices}

\section{Certification}\label{app:cert}

All certified numbers are two-sided enclosures in Arb ball arithmetic \cite{Arb2017} at
$256$ bits, through python-flint. Circuit parameters and polarisations enter as exact
binary64 numbers, so each certificate concerns the circuit as stored. The round unitary is
built from closed-form sines and cosines, since $(Z\otimes Y)^2 = \mathrm{SWAP}^2 =
\mathbb{I}$, and the affine pair by tracing out the ancillas. Composites follow exactly
from the pair algebra of Sec.~\ref{sec:choi}. For $\lmin$ of the $4\times4$ matrix $H$:
\begin{itemize}
\item the upper bound is the Rayleigh quotient of a float64-proposed eigenvector,
      certified in ball arithmetic;
\item the lower bound $s$ comes from Sylvester's criterion on $H - s\,\mathbb{I}$ (all
      leading principal minors certified positive), with $s$ backed off from the float64
      value until the criterion passes;
\item an enclosure that straddles zero counts as a failed sign certificate.
\end{itemize}
No floating-point eigensolver is trusted. The worked-example targets are produced by
\texttt{certify\_g120.py}, the two-sided margins of Table~\ref{tab:twosided} by
\texttt{two\_sided\_margins.py}, and every certified number of version 2 by
\texttt{paperV\_checks.py}.

\section{Deposited files}\label{app:files}

The ancillary directory of arXiv:2609.09350v3 keeps all files of version 2 unchanged:
certificates, search chassis and output, schedule engine, and the re-derivation script
\texttt{paperV\_checks.py}. It adds a folder \texttt{staircase/} for this version:
\begin{itemize}
\item \texttt{staircase\_circuits.py}: circuit builders and exactness checks;
\item \texttt{run\_staircase.py}: build, price, simulate or submit the jobs;
\item \texttt{analyze\_staircase.py}: estimation, bootstrap, readings, index, pilot rule
      and composed-versus-direct test;
\item \texttt{choose\_layout.py}: layout ranking;
\item \texttt{tolerance.py}: Table~\ref{tab:tolerance};
\item \texttt{certify\_g120.py} and \texttt{two\_sided\_margins.py}: the certified targets
      and margins, with their outputs;
\item the records, counts and analyses of every simulated run in
      Sec.~\ref{sec:sims}, and \texttt{make\_fig2.py} for Fig.~\ref{fig:staircase};
\item a \texttt{README.md} with the commands that reproduce every table and figure of
      this version.
\end{itemize}

\section{Methods note: use of generative AI}\label{app:ai}

This version was prepared with Claude (Anthropic), used through the Claude desktop app
between 15 and 23 September 2026 (model versions Claude Fable 5.1 and Claude Opus 5.5). It
was used to:
\begin{itemize}
\item draft and revise the manuscript text;
\item write and run the scripts of the worked example (circuit construction and
      verification, noisy simulations, noise-tolerance scans, analysis, and the new
      certification scripts);
\item check the references against Crossref and arXiv.
\end{itemize}
The author designed the study, directed and checked all of this work, verified the
results, and takes full responsibility for the content.

\end{appendices}

\bibliography{refs}

\end{document}